\frenchspacing \setlength{\pdfpagewidth}{8.5in}
\begin{document}

\newtheorem{theorem}{Theorem}
\newtheorem{example}{Example}
\newtheorem{lemma}{Lemma}
\newtheorem{corollary}{Corollary}
\newtheorem{definition}{Definition}

\title{Oblivious Stacking and MAX $k$-CUT for Circle Graphs}

\author{Martin Olsen}
\affil{Department of Business Development and Technology \authorcr Aarhus University \authorcr Denmark \authorcr martino@btech.au.dk}

\maketitle

\begin{abstract}
Stacking is an important process within logistics. Some notable examples of items to be stacked are steel bars or steel plates in a steel yard or containers in a container terminal or on a ship. We say that two items are {\em conflicting} if their storage time intervals overlap in which case one of the items needs to be rehandled if the items are stored at the same LIFO storage location. We consider the problem of stacking items using $k$ LIFO locations with a minimum number of conflicts between items sharing a location. We present an extremely simple online stacking algorithm that is oblivious to the storage time intervals and storage locations of {\em all other items} when it picks a storage location for an item. The risk of assigning the same storage location to two conflicting items is proved to be of the order $1/k^2$ under mild assumptions on the distribution of the storage time intervals for the items. Intuitively, it seems natural to pick a storage location uniformly at random in the oblivious setting implying a risk of $1/k$ so the risk for our algorithm is surprisingly low. Our results can also be expressed within the context of the MAX $k$-CUT problem for circle graphs. The results indicate that circle graphs on average have relatively big $k$-cuts compared to the total number of edges.  
%\keywords{Oblivious algorithms \and MAX $k$-CUT \and Stacking \and Circle graphs}
\end{abstract}

\section{Introduction}

We consider a storage area with a continuous flow of items entering and leaving the area. The storage area consists of $k$ storage locations accessible in a LIFO manner. If we want to retrieve a target item from a storage location, we have to deal with all the items present at the storage location that have arrived later than the targeted item. The items that have arrived later but have not left yet are conflicting with the targeted item so we focus on the problem of assigning storage locations to the items with a minimum number of conflicts among items sharing a location. We refer to this problem as {\em stacking} since each storage location acts as a stack data structure.

The stacking problem is an important problem within logistics with many applications such as shipment of containers~\cite{Borgman2010,kshift-dam}, track assignment for trains (the storage locations are train tracks)~\cite{Cornelsen2007,Demange2012} and stacking of steel bars or steel plates~\cite{Kim11,Konig2007,Rei13}. Actually, a storage location can be any place where items potentially block each other as described earlier: a truck, a train wagon, a shelve in a warehouse, etc.

We can rephrase the problem as a graph problem if we construct a graph with a vertex for each item and an edge between two vertices if the corresponding items are conflicting. Now the objective is to color the vertices using $k$ colors with a minimum number of edges connecting two vertices with the same color or, equivalently, to maximize the cardinality of the set of edges connecting vertices with different colors. The latter set of edges is typically called the {\em cut} and the problem is commonly known as the MAX $k$-CUT problem. The MAX $k$-CUT problem is a famous graph problem appearing in many contexts.
The MAX $k$-CUT problem is closely related to the coloring problem where the objective is to use as few colors as possible with no edges connecting vertices with the same color. For the MAX $k$-CUT problem the number of colors is known a priori as opposed to the coloring problem where the number of colors is output by the algorithm.

\subsection{Related Work}
%k-Coloring of circle graphs

An {\em offline} stacking algorithm has access to all information on all the items before the assignment of storage locations is carried out in contrast to an {\em online} stacking algorithm where a decision on where to store an item is made without access to any information on future items.

First, we consider related work for the coloring version of stacking where the objective is to use as few storage locations as possible with no conflicts. Upper and lower bounds for the competitive ratio for online coloring are presented by Demange et al.~\cite{Demange2012} and Demange and Olsen~\cite{DBLP:conf/walcom/DemangeO18}. Olsen and Gross~\cite{DBLP:conf/iccl2/OlsenG15} have developed a polynomial time algorithm for online coloring with a competitive ratio that converges to $1$ in probability if the endpoints of the storage time intervals are picked independently and uniformly at random. Olsen~\cite{OlsenLion20} has also shown how to use Reinforcement Learning to improve online stacking heuristics.

The offline version of coloring is NP-hard for unbounded stack capacity~\cite{Avriel2000} and for fixed stack capacity $h \geq 6$~\cite{Cornelsen2007} and the computational complexity for $2 \leq h \leq 5$ is an open problem (to the best of our knowledge).

%Max k-Cut for circle graphs 
We now take a look at related work for the MAX $k$-CUT version of stacking where the number of storage locations is given and where the objective is to minimize the number of conflicts among the items. Handling shipping containers that block each other in a stack is known as {\em rehandling} or {\em shifting}. Tierney et al.~\cite{kshift-dam} show how to compute the minimum number of shifts offline in polynomial time using a fixed number of stacks (storage locations) with bounded capacity. The MAX $k$-CUT version of stacking is at least as hard as the coloring problem as can be easily seen by reduction from the coloring problem. The offline version is even NP-hard for $k=2$ (unbounded capacity)~\cite{Pocai16}.

For the offline version of MAX $k$-CUT for graphs in general, we can achieve an approximation ratio $1-\frac{1}{k}$
with a simple polynomial time algorithm but it is not possible to achieve an approximation ratio better than $1-\frac{1}{34k}$ if $NP \neq P$~\cite{cj97-02}. Finally, it should be mentioned that Coja-Oghlan et al.~\cite{Oghlan06} examine the MAX $k$-CUT problem in random graphs.

\subsection{Contribution}\label{sec:contribution}
%Dobbelt: stacking og random circle graphs ...

For the remaining part of the paper, we focus on the MAX $k$-CUT version of stacking so from now on we assume that our goal is to minimize the number of conflicts between items stored at the same location using $k$ storage locations.

Our main aim is to investigate the case where an online algorithm does {\em not} have access to the storage time intervals for the items already stored and does {\em not} keep track of the locations of these items but {\em only} has access to the arrival time and the departure time for an entering item when it has to pick a storage location for that item. In other words, the algorithm is {\em oblivious} to the time intervals for {\em all other items} and the storage locations assigned to items in the past. We will refer to such an algorithm as an {\em oblivious stacking algorithm}. Such an algorithm is extremely simple to implement and it can even be used in a scenario with several disconnected stacking agents. To the best of our knowledge, we are the first to examine this type of stacking algorithms but such oblivious algorithms have been examined for other problems (for example routing~\cite{Bansal03,DBLP:journals/tc/BuschMX08}).

Intuitively, it might look a little strange to consider the oblivious setting since it seems hard to do anything better than to assign a random storage location to an entering item if we do not have any information on the other items. The risk of assigning the same storage location to two conflicting items is $1/k$ if we use the random strategy. We present an extremely simple algorithm for oblivious stacking where this risk is reduced dramatically to the order $1/k^2$ under the assumption that the storage time intervals are produced by a model proposed by Scheinerman~\cite{SCHEINERMAN1990287}. The algorithm is based on the intuition that the risk of a conflict between two items is low if the {\em centers} of their storage time intervals are close to each other since such items only will have a conflict if the lengths of their storage time intervals are roughly the same. The basic principle of our algorithm is to place items at the same storage location if the centers of their storage time intervals are close to each other or far apart.

Our algorithm can used on its own but it can also be used as a part of a hierarchical approach to divide the items into $k$ groups with only a few conflicts in each group. We can for example divide a steel yard or a container terminal into a few areas and then use our algorithm to assign an area to each item in a preprocessing step. For each area, we then apply a local stacking algorithm to assign a specific storage location to each item (the local algorithms do not have to be identical).

The stacking conflict graphs described earlier are so called {\em circle graphs} -- also known as {\em interval overlap graphs} (more details in Sec.~\ref{sec:preliminaries}). From a graph theory perspective, we offer a contribution for an audience interested in circle graphs and show that our algorithm computes a $k$-cut with expected cardinality satisfying $E(|cut|)\geq(1-O(1/k^2))E(m)$ for random circle graphs represented by an extended version of the Scheinerman model ($m$ denotes the number of edges). This indicates that random circle graphs on average have relatively big $k$-cuts compared to the number of edges $m$ in contrast to random graphs of the Erd\H{o}s-R\'enyi type~\cite{Oghlan06}.

The problem that we consider is formally defined in Sec.~\ref{sec:preliminaries} where we also present our algorithm and the first model for generating the instances. The performance of our algorithm is analyzed in Sec.~\ref{sec:Scheinerman} using average case analysis where we compute the exact risk that our oblivious algorithm assigns two conflicting items to the same storage location. In Sec.~\ref{sec:extended_Scheinerman} we extend our model for generating the instances and demonstrate that the risk is low (of the order $1/k^2$) even in a more generic setting. Finally, our results are related to circle graphs/interval overlap graphs.

\section{Preliminaries}\label{sec:preliminaries}

\subsection{The Problem}

Two intervals $[a, b]$ and $[c, d]$ are said to {\em overlap} if and only if the intervals intersect and neither is contained in the other: $a < c < b < d$ or $c < a < d < b$. Two items are conflicting if and only if their storage time intervals overlap since shifting/rehandling is necessary exactly in this case if the items are stored at the same location. A graph with vertices representing intervals and edges representing overlaps is commonly known as an {\em interval overlap graph}. A {\em circle graph} is a graph where vertices represent chords of a circle with an edge between two vertices if the corresponding chords intersect. Gavril~\cite{Gavril1973} has shown that a graph is an interval overlap graph if and only if it is a circle graph. In other words, a conflict graph for stacking is an interval overlap graph/circle graph implying relevance of our results, as already mentioned, for a broader audience interested in the MAX $k$-CUT problem for such graphs. It should be noted that the connection between stacking and interval overlap graphs/circle graphs was established by Avriel et al.~\cite{Avriel2000}.

To sum up, the problem considered in this paper can be formally and concisely defined as follows where the items are represented by their storage time intervals:
\begin{definition}\label{def:max_k_cut} The MAX $k$-CUT STACKING
  problem:
\begin{itemize}
\item Instance: A set of $n$ intervals $\mathcal{I}_n = \{I_1, I_2, \ldots , I_n\}$
\item Solution: A coloring of the intervals using colors $\{1, 2, 3, \ldots , k\}$ with a maximum cardinality cut where the cut is the set of unordered pairs of overlapping intervals with different colors
\end{itemize}
\end{definition}

An example of a MAX $k$-CUT STACKING instance and an optimal solution is displayed in Fig.~\ref{fig:example} for $n=4$ and $k=2$. The intervals are shown to the left and the corresponding interval overlap graph/circle graph is shown to the right. The intevals/vertices have received a gray or a black color producing a cut with cardinality 4: $|cut|=4$. The optimal solution has only one conflict between items stored at the same location.

\begin{figure}
  \centering
  \subfloat{\label{fig:intervals_example}\includegraphics[scale=0.55]{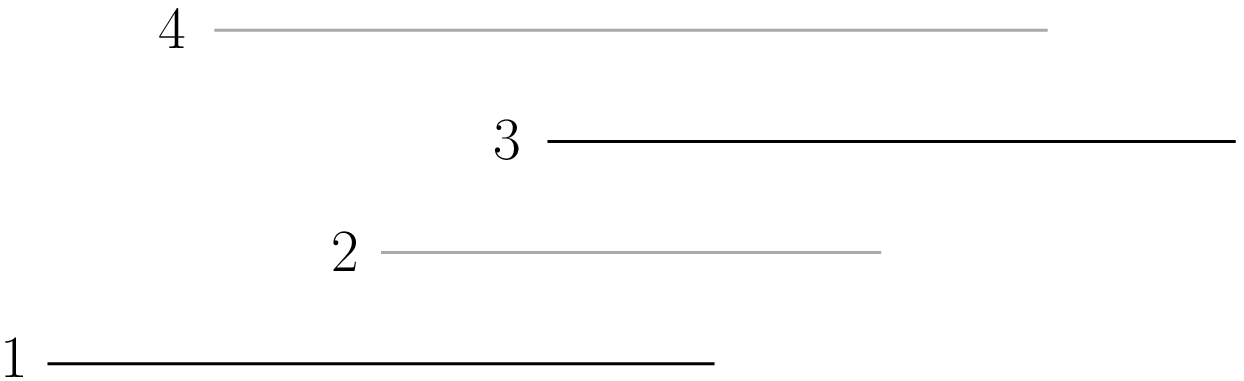}}
  \hspace{.5cm}
  \subfloat{\label{fig:graph_example}\includegraphics[scale=0.55]{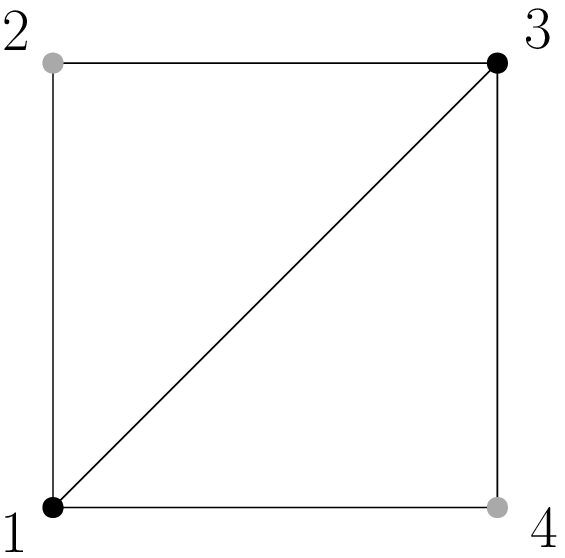}}
  \caption{An example of a MAX $k$-CUT STACKING instance for $n=4$ and $k=2$ and an optimal solution. The coloring uses the colors gray and black and the conflict graph is shown to the right. The cut $\{\{1, 2\}, \{2, 3\}, \{1, 4\}, \{3, 4\}\}$ has cardinality 4: $|cut|=4$.}
\label{fig:example}
\end{figure}
Please note that there is a one-to-one correspondence between the colors and the storage locations. It should also be noted that we consider storage locations with unbounded capacity. According to Kim et al.~\cite{Kim11}, this assumption is not critical considering stacking of steel plates since many plates can be stacked together because the thicknesses of the plates are very small. If our algorithm is used in a hierarchical approach as described in Sec.~\ref{sec:contribution} then it also makes sense to assume that the capacity is unbounded so there are applications where this assumption is justified.

\subsection{The Instance Model}

We will use a simple stochastic model for generating the instances introduced by Scheinerman~\cite{SCHEINERMAN1990287} in his work on random interval (intersection\footnote{Two overlapping intervals intersect but the converse is not necessarily true}) graphs:

\begin{definition}\label{def:Scheinerman_model} The Scheinerman Model~\cite{SCHEINERMAN1990287}: The centers of the intervals in $\mathcal{I}_n$ are drawn independently using a uniform distribution on $[0, 1]$ and the lengths of the intervals are drawn independently using a uniform distribution on $[0, L]$ for some number $L$.
\end{definition}

In order to be able obtain exact results for the analysis to follow, we assume the following:
\begin{equation}
\label{eq:L_assumption}
1/\left(\frac{k}{k-1} L\right) \in \mathbb{Z} \enspace .
\end{equation}
Our second model for generating instances (defined later in Sec.~\ref{sec:extended_Scheinerman}) allows us to use an arbitrary bounded continuous probability density function for the lengths where $L$ denotes an upper bound of the lengths so the second model is more flexible than our first model but qualitatively the results obtained for the two models are the same. As already stated, the assumption (\ref{eq:L_assumption}) facilitates exact computations.

\subsection{The Algorithm}

Our algorithm works as follows. The interval $[0, 1]$ is split into $\frac{k-1}{L}$ consecutive intervals $J_1, J_2, \ldots, J_{\frac{k-1}{L}}$ with length $\frac{L}{k-1}$. We now color the intervals with the colors $\{1, 2, \ldots, k\}$ in a circular manner such that $J_i$ receives color $(i-1) \bmod k + 1$. Please note that (\ref{eq:L_assumption}) implies that the final interval receives the color $k$ and has $1$ as its right endpoint.

Let an item represented by the storage time interval $I = [x, y]$ with center $c(I) = (x+y)/2$ enter the storage area. The algorithm locates the $J$-interval containing the center and assign the color of this $J$-interval to the entering item:
\begin{equation}
\label{eq:algorithm}
A(I) = \left\lfloor \frac{(k-1) c(I)}{L} \right\rfloor \bmod k + 1, \enspace I \in \mathcal{I}_n \enspace .
\end{equation}
The dynamics of the algorithm is illustrated in Fig.~\ref{fig:algorithm} showing an example with $k=3$ and $L=1/3$.
\begin{figure}
  \centering 
  \includegraphics[scale=1.0]{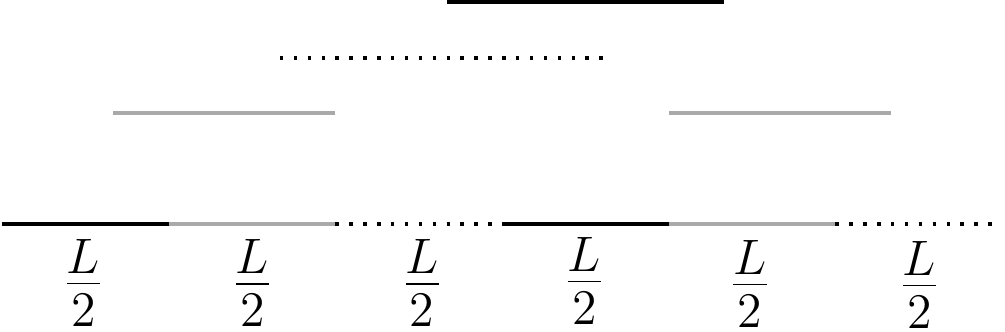}
  \caption{The figure shows how our algorithm works for $k=3$ and $L=1/3$. The items are represented by the four storage time intervals at the top. An item receives the color of the $J$-interval at the bottom containing the center of the storage time interval for the item.}
  \label{fig:algorithm}
\end{figure}
This is a very simple online algorithm that only considers information on the current interval. As earlier mentioned, we refer to such an algorithm as an {\em oblivious} algorithm. The running time per interval is $O(1)$.

The intuition behind our algorithm is as follows. If two intervals receive the same color then there are two possibilities: 1) The centers of the intervals belong to different $J$-intervals in which case the intervals are not overlapping, or 2) The centers of the intervals belong to the same $J$-interval in which case the centers are close to each other implying a low risk of an overlap since an overlap only occurs if the lengths of the intervals are roughly the same. In other words, we will observe relatively few overlapping intervals with the same color.

\section{Analysis of the Scheinerman Model}\label{sec:Scheinerman}

The intuition will now be verified using exact computations.
The key observation is as follows: two intervals are not likely to overlap if their centers are close to each other. We now present a lemma quantifying this observation based on the Scheinerman model (Definition~\ref{def:Scheinerman_model}). Let $C=d$ denote the event that the distance between the centers of two intervals is $d$ and let $OV$ denote the event that two intervals overlap. 

\begin{lemma}\label{lemma:P_ov_given_center} For two intervals drawn using the Scheinerman model we have the following:
$$\Pr(OV \mid C=xL) = \begin{cases} 4x - 6x^2 & ,\enspace 0 \leq x \leq 0.5 \\
                      \frac{1}{2}(2-2x)^2    & ,\enspace 0.5 < x \leq 1 \\     
                      0                      & ,\enspace 1 < x %
        \end{cases}$$
\end{lemma}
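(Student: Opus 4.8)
The plan is to reduce the statement to an elementary area computation in the unit square. First I would fix the two intervals and, without loss of generality, assume their centers satisfy $c_1<c_2$ with $c_2-c_1=d=xL$, writing interval $i$ in center/length form as $[c_i-\ell_i/2,\,c_i+\ell_i/2]$. Under the Scheinerman model the lengths $\ell_1,\ell_2$ are independent, uniform on $[0,L]$, and independent of the centers, so conditioning on the event $C=xL$ leaves $(\ell_1,\ell_2)$ uniform on $[0,L]^2$; everything reduces to computing a probability over this square.

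Next I would translate the event $OV$ into conditions on $\ell_1,\ell_2$. With $c_1<c_2$ one checks directly from the endpoints that the two intervals intersect iff $\ell_1+\ell_2\ge 2d$, that interval $1$ contains interval $2$ iff $\ell_1-\ell_2\ge 2d$, and that interval $2$ contains interval $1$ iff $\ell_2-\ell_1\ge 2d$ (the other containment pattern is impossible once $c_1<c_2$). Since two intersecting intervals either have one containing the other or overlap in the sense of the definition used in the paper, the event $OV$ holds -- up to a set of probability zero coming from the boundary equalities, which is negligible because $(\ell_1,\ell_2)$ has a density -- exactly when $\ell_1+\ell_2>2d$ and $|\ell_1-\ell_2|<2d$. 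Rescaling by $u=\ell_1/L$ and $v=\ell_2/L$, which are uniform on $[0,1]$, and using $2d=2xL$, this gives
$$\Pr(OV \mid C=xL)=\operatorname{area}\bigl\{(u,v)\in[0,1]^2 : u+v>2x,\ |u-v|<2x\bigr\}.$$

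Finally I would evaluate this area by cases according to how $2x$ compares with $1$ and $2$. For $x>1$ the region is empty since $u+v\le 2<2x$. For $1/2<x\le 1$ the constraint $|u-v|\le 1<2x$ holds automatically, so the area equals that of the corner triangle $\{u+v>2x\}$, namely $\tfrac12(2-2x)^2$. For $0\le x\le 1/2$, the set $\{u+v>2x\}$ is the unit square minus a corner triangle of area $2x^2$; removing from it the two further disjoint triangles $\{u-v\ge 2x\}$ and $\{v-u\ge 2x\}$, each of area $\tfrac12(1-2x)^2$ and each contained in $\{u+v\ge 2x\}$, yields $1-2x^2-(1-2x)^2=4x-6x^2$. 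These three values are precisely the three cases in the statement.

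The only step requiring genuine care is the translation in the second paragraph: one must use the proper-overlap definition (intersecting but neither contained in the other) rather than mere intersection, and keep track of which length inequality corresponds to containment in which direction; everything after that is a routine decomposition of the unit square into triangles.
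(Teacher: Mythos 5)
Your proposal is correct and follows essentially the same route as the paper: you derive the same overlap condition $\ell_1+\ell_2>2d$, $|\ell_1-\ell_2|<2d$ (the paper's condition~(\ref{eq:overlap_condition})) and then compute $\Pr(OV\mid C=xL)$ as an area in the square of length pairs, arriving at the identical intermediate expressions $1-2x^2-(1-2x)^2$ and $\tfrac{1}{2}(2-2x)^2$. The only difference is presentational -- the paper reads the areas off Figures~\ref{fig:P_ov_given_center} and~\ref{fig:P_ov_given_center_case_B}, whereas you spell out the triangle decomposition and the overlap-versus-containment case analysis in the text.
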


\begin{proof}

\begin{figure}
  \centering 
  \includegraphics[scale=0.9]{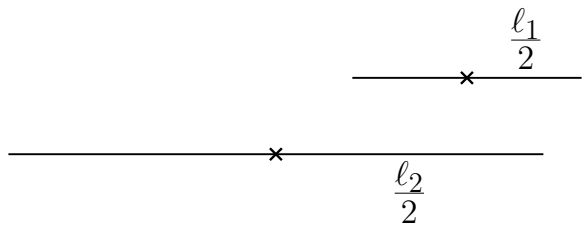}
  \caption{Two intervals with lengths $\ell 1 < \ell_2$ and distance $d$ between the centers overlap if and only if $d < \frac{\ell_1}{2} + \frac{\ell_2}{2} \wedge d + \frac{\ell_1}{2} > \frac{\ell_2}{2}$.}
  \label{fig:P_ov_given_center_two_intervals}
\end{figure}

Consider two intervals with lengths $\ell_1$ and $\ell_2$ with $\ell_1 < \ell_2$ (see Fig.~\ref{fig:P_ov_given_center_two_intervals}). Let $d$ denote the distance between the centers of the intervals. The intervals overlap if and only if
\begin{equation}
\label{eq:overlap_condition}
d < \frac{\ell_1}{2} + \frac{\ell_2}{2} \wedge d + \frac{\ell_1}{2} > \frac{\ell_2}{2} \enspace .
\end{equation}

\begin{figure}
  \centering 
  \includegraphics[scale=0.8]{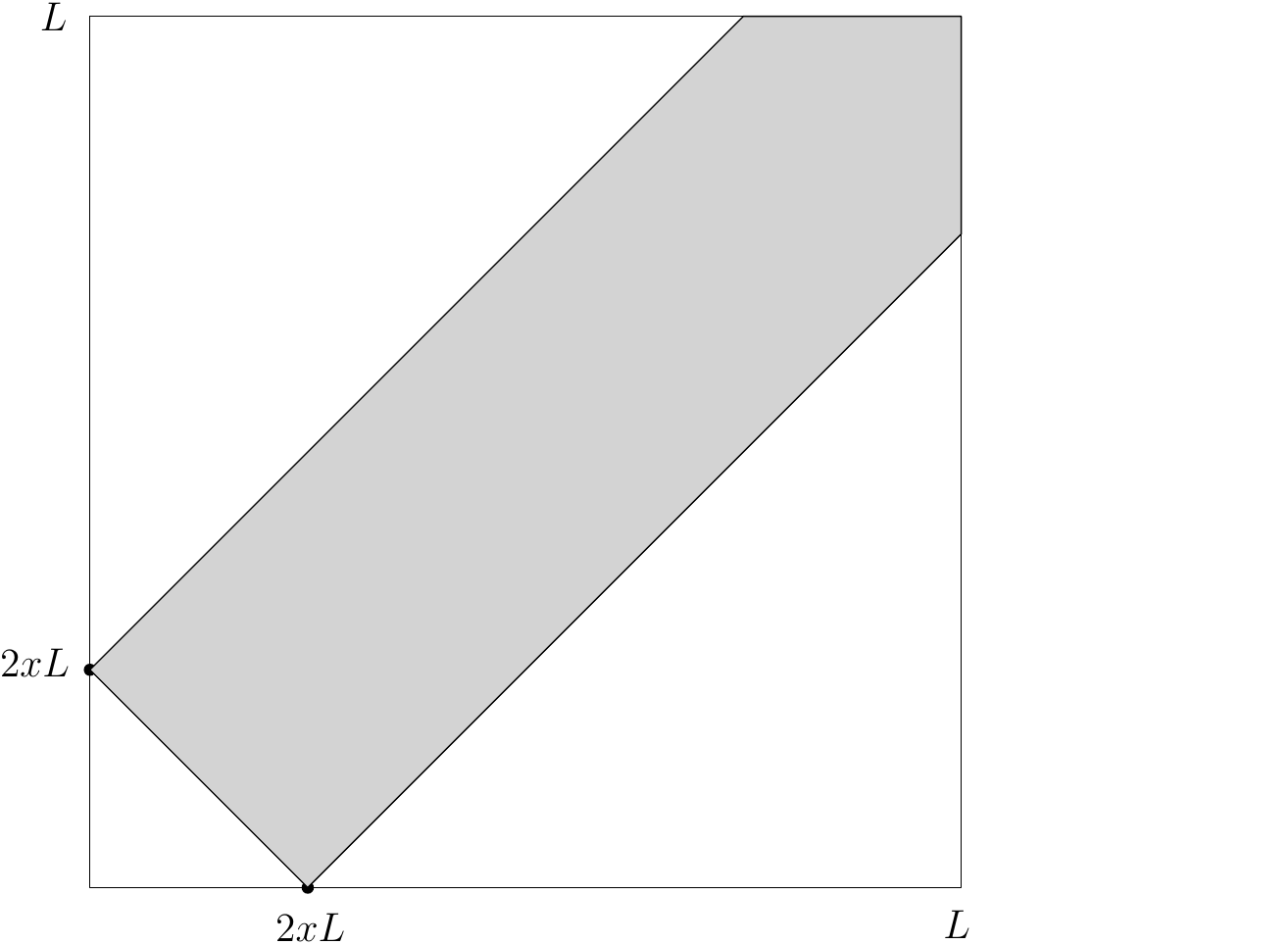}
  \caption{The square $[0, L] \times [0, L]$ represents the possible pairs of lengths for the two intervals. The gray region illustrates the cases where the two intervals overlap given $C = xL$ for $x \leq 0.5$.}
  \label{fig:P_ov_given_center}
\end{figure}

The gray region in Fig.~\ref{fig:P_ov_given_center} contains all pairs $(\ell_1, \ell_2)$ -- including intervals with $\ell_1 > \ell_2$ -- that correspond to overlapping intervals with $d = xL$ for $x \leq 0.5$. The conditional probability $\Pr(OV \mid C=xL)$ is computed as the area of the gray region divided by $L^2$: 
$$\Pr(OV \mid C=xL) = 1-2x^2-(1-2x)^2 = 4x - 6x^2 \enspace, \enspace x \leq 0.5 \enspace .$$
The case $0.5 < x \leq 1$ is handled in a similar way (see Fig.~\ref{fig:P_ov_given_center_case_B}):
$$\Pr(OV \mid C=x L) = \frac{1}{2}(2-2x)^2, \enspace 0.5 < x \leq 1 \enspace .$$
\begin{figure}
  \centering 
  \includegraphics[scale=0.8]{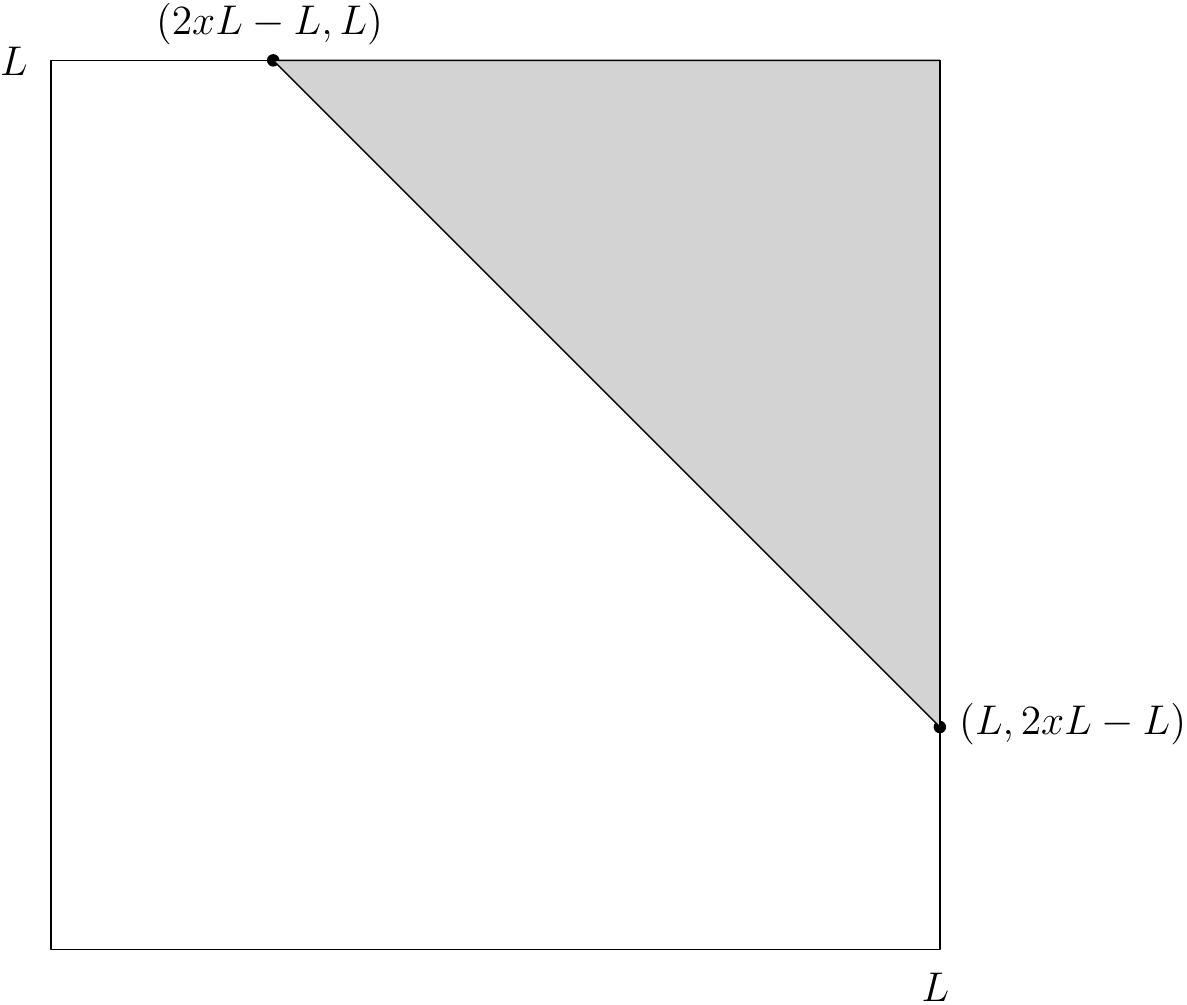}
  \caption{The square $[0, L] \times [0, L]$ is once again representing the possible pairs of lengths for the two intervals. The gray region now illustrates the cases with an overlap given $C = xL$ for $0.5 < x \leq 1$}
  \label{fig:P_ov_given_center_case_B}
\end{figure}
%\hfill \qed
\end{proof}
We can see that the risk of an overlap is close to $0$ if the centers are close to each other and that the risk is highest if the distance between the centers is $\frac{1}{3}L$. The risk decreases for higher distances between the centers than $\frac{1}{3}L$ and reaches (not surprisingly) $0$ for distances above $L$.

Even though our algorithm is oblivious to information on other items than the entering item, it has a very low risk of assigning the same storage location to two conflicting items compared to a random stacking strategy. The risk is of the order $1/k^2$ assuming that the instances are generated using the Scheinerman model. Our next lemma expresses the exact risk where $SC$ denotes the event that two intervals receive the same color ($=$ storage location) by our algorithm.

\begin{lemma}\label{lemma:P_SC_given_OV} For two intervals drawn using the Scheinerman model with $L$ satisfying (\ref{eq:L_assumption}) the following holds for $k \geq 3$:
$$\Pr(SC \mid OV) = \frac{12}{8 - 3L} \left(\frac{4}{3(k-1)^2}-\frac{1}{(k-1)^3}\right) \enspace .$$
\end{lemma}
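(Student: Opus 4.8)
The plan is to compute $\Pr(SC \mid OV)$ via the identity $\Pr(SC \mid OV) = \Pr(SC \cap OV)/\Pr(OV)$, obtaining each factor by conditioning on the distance $C$ between the two centers and integrating against its density. Since the two centers are independent uniform on $[0,1]$, the distance $C$ has the well-known triangular density $f_C(d) = 2(1-d)$ on $[0,1]$; substituting $d = xL$ we get a density $2(1-xL)L$ in the variable $x$ on $[0, 1/L]$. Lemma~\ref{lemma:P_ov_given_center} already supplies $\Pr(OV \mid C = xL)$ as a piecewise polynomial supported on $x \in [0,1]$, so
\[
\Pr(OV) = \int_0^{1/2} (4x - 6x^2)\,2(1-xL)L\,dx + \int_{1/2}^{1} \tfrac12(2-2x)^2\,2(1-xL)L\,dx,
\]
a routine polynomial integral that should collapse to something proportional to $(8-3L)$, matching the denominator in the claimed formula.

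The main new ingredient is $\Pr(SC \cap OV \mid C = xL)$. Here I would argue that, given an overlap, whether the two centers land in the same $J$-interval — and hence receive the same color, or land in $J$-intervals whose indices differ by a multiple of $k$ — is essentially a question about where a length-$xL$ gap sits relative to the partition of $[0,1]$ into blocks of length $L/(k-1)$. Because $xL \le L < L/(k-1)\cdot(k-1)$ and, more importantly, $xL \le L$ is small compared to $k$ blocks, two overlapping intervals can only share a color via case (2) of the algorithm's intuition: their centers fall in the same $J$-interval (the "far apart" case requiring a center-distance of at least $(k-1)\cdot L/(k-1) = L$ cannot produce an overlap, except possibly on a measure-zero boundary, by Lemma~\ref{lemma:P_ov_given_center}). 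So I need the probability that two points at distance $xL$, with the left point uniform, fall in a common block of length $\ell := L/(k-1)$. Conditioned on the configuration, this probability is $\max(0, 1 - xL/\ell) = 1 - x(k-1)$ when $x \le 1/(k-1)$ and $0$ otherwise — here assumption~(\ref{eq:L_assumption}) is what makes the block partition fit $[0,1]$ exactly with no fractional leftover block, so this local computation is globally exact. Combining,
\[
\Pr(SC \cap OV) = \int_0^{1/(k-1)} \Pr(OV \mid C = xL)\,\bigl(1 - x(k-1)\bigr)\,2(1-xL)L\,dx,
\]
and since $1/(k-1) \le 1/2$ for $k \ge 3$, only the first branch $4x - 6x^2$ of Lemma~\ref{lemma:P_ov_given_center} is relevant on the entire range of integration — a pleasant simplification.

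From there it is a matter of expanding the integrand $(4x - 6x^2)(1 - x(k-1))\cdot 2L(1-xL)$ and integrating term by term from $0$ to $1/(k-1)$; I expect the $L$-dependent part to again factor out as $(8 - 3L)$ up to a constant (this is the check that the ratio really does separate into the stated product), leaving the $k$-dependent part as a combination of $1/(k-1)^2$ and $1/(k-1)^3$ coming from $\int x\,dx$ and $\int x^2\,dx$ over $[0,1/(k-1)]$ against the leading terms. Dividing by $\Pr(OV)$ should produce exactly $\frac{12}{8-3L}\bigl(\frac{4}{3(k-1)^2} - \frac{1}{(k-1)^3}\bigr)$.

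The step I expect to be the genuine obstacle — as opposed to bookkeeping — is justifying rigorously that overlapping intervals with the same color must have their centers in the \emph{same} $J$-interval, i.e.\ ruling out the "centers far apart but same color mod $k$" case cleanly, including the behavior at block boundaries and at the endpoints $0$ and $1$. This is exactly where~(\ref{eq:L_assumption}) earns its keep: it guarantees the last block ends precisely at $1$ with color $k$, so there is no awkward truncated block to handle separately, and the "same color" event decomposes exactly as "same block." Everything after that reduces to the two polynomial integrals above.
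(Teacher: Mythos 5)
Your overall architecture is essentially the paper's: use assumption~(\ref{eq:L_assumption}) to reduce ``same color and overlapping'' to ``same $J$-interval and overlapping'' (centers in distinct blocks of the same color are at distance at least $L$, so Lemma~\ref{lemma:P_ov_given_center} gives zero overlap probability there), note that on $[0,1/(k-1)]\subseteq[0,1/2]$ only the branch $4x-6x^2$ matters for $k\ge 3$, and finish with two polynomial integrals and a ratio. The paper organizes this as $\Pr(SC\mid OV)=\Pr(OV\mid SC)\Pr(SC)/\Pr(OV)$ with $\Pr(OV\mid SC)=\Pr(OV\mid SI)\Pr(SI\mid SC)$, $\Pr(SI\mid SC)=\frac{k}{k-1}L$, $\Pr(SC)=\frac1k$, while you condition on the center distance throughout; both routes are viable, and your $\Pr(OV)$ integral is exactly the paper's (it equals $\frac{2}{3}L-\frac14L^2=\frac{L(8-3L)}{12}$).

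The genuine problem is your key new step, the same-block probability given the center distance. Conditioned on $C=xL$, the left center is uniform on $[0,1-xL]$, not uniform over a period of the block tiling, so the correct conditional probability is
\begin{equation*}
\Pr(SI\mid C=xL)=\frac{1-x(k-1)}{1-xL}\quad\text{for } x\le \tfrac{1}{k-1},
\end{equation*}
not $1-x(k-1)$. The denominator $1-xL$ exactly cancels the triangular density factor $2(1-xL)L$ you then multiply by, so the correct expression is $\Pr(SC\cap OV)=2L\int_0^{1/(k-1)}(4x-6x^2)\left(1-x(k-1)\right)dx=L\left(\frac{4}{3(k-1)^2}-\frac{1}{(k-1)^3}\right)$, which indeed yields the lemma after dividing by $\frac{L(8-3L)}{12}$. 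Your integral instead retains a spurious factor $(1-xL)$; evaluating it produces an extra term of order $L^2/(k-1)^3$ (concretely $-2L^2\left(\frac{1}{3(k-1)^3}-\frac{3}{10(k-1)^4}\right)$), so the claimed exact identity would \emph{not} come out, and the advertised clean factorization into $\frac{12}{8-3L}$ times a function of $k$ would fail by a relative error of order $L/(k-1)$. Since you left the final integrations as ``I expect,'' this is precisely the point where the computation would have broken; correcting the conditional probability (or, as the paper does, conditioning on $SI$ first and using the within-block distance density $g_{1/(k-1)}$, which handles the boundary bookkeeping automatically) closes the argument.
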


\begin{proof}
It is well known and straightforward to show that the probability density function for the distance between two random numbers chosen uniformly at random in $[0, a]$ is as follows:
%$$F(x) = 1-\frac{(a-x)^2}{a^2}$$
\begin{equation}
\label{eq:pdf_distance}
g_a(x) = \frac{2}{a^2}\left(a-x\right) ,\enspace 0 \leq x \leq a \enspace .
\end{equation}

Let $SI$ denote the event that the centers of the intervals belong to the same $J$-interval. There are $\frac{1}{k} \cdot \frac{k-1}{L}$ intervals for each color implying
$$\Pr(SI \mid SC) = \frac{k}{k-1} L \enspace .$$
The maximum length of an interval is $L$ so $\Pr(OV \cap \overline{SI} \mid SC)=0$ where $\overline{SI}$ is the complementary event of the event $SI$. This means that
$$\Pr(OV \mid SC) = \Pr(OV \cap SI \mid SC) \enspace .$$
The event $SI$ implies the event $SC$ ($SI \cap SC = SI$):
$$\Pr(OV \cap SI \mid SC) = \Pr(OV \mid SI) \cdot  \Pr(SI \mid SC) \enspace .$$
We can now apply Lemma~\ref{lemma:P_ov_given_center} and (\ref{eq:pdf_distance}) and use the law of total probability:
%$$\Pr(OV \mid S) \leq (4/k-2/k^2) \cdot 1/(1/L) = (4/k-2/k^2)L $$
\begin{equation*}
\begin{split}
\Pr(OV \mid SC) & =  \Pr(OV \mid SI) \cdot  \Pr(SI \mid SC) \\ 
& = \int_0^{\frac{1}{k-1}} g_{\frac{1}{k-1}}(x)\Pr(OV \mid C=xL) dx \cdot \Pr(SI \mid SC) \\
& = \int_0^{\frac{1}{k-1}} 2(k-1)^2\left(\frac{1}{k-1}-x\right)(4x - 6x^2) dx \cdot \frac{k}{k-1}L \\
%& = kL\int_0^{\frac{1}{k-1}} 2\left(1-(k-1)x\right)(4x - 6x^2) dx \\
%& = kL\int_0^{\frac{1}{k-1}} 2\left(4x-6x^2-4(k-1)x^2+6(k-1)x^3\right) dx \\
& = kL \left[4x^2-4x^3-\frac{8}{3}(k-1)x^3+3(k-1)x^4\right]_0^{\frac{1}{k-1}} \\
%& = kL \left(\frac{4}{(k-1)^2}-\frac{4}{(k-1)^3}-\frac{8}{3(k-1)^2}+\frac{3}{(k-1)^3}\right) \\
& = kL \left(\frac{4}{3(k-1)^2}-\frac{1}{(k-1)^3}\right) \enspace .
\end{split}
\end{equation*}
We once again use the law of total probability and Lemma~\ref{lemma:P_ov_given_center} and compute the probability that two intervals overlap. This time we use $g_1(x)$ from (\ref{eq:pdf_distance}):
%$$\Pr(OV) = \frac{2}{3} L \mbox{ (see below)}$$
\begin{equation*}
\begin{split}
\Pr(OV) & = \int_{0}^{1} g_1(x)\Pr(OV \mid C=x) dx \\
& = \int_{0}^{1/2L} \left(4 \cdot \frac{x}{L} - 6\left(\frac{x}{L}\right)^2\right)2(1-x) dx + \int_{1/2L}^{L} \frac{1}{2}\left(2-2 \cdot \frac{x}{L}\right)^2 2(1-x) dx \\
& = \int_{0}^{1/2} L\left(4t - 6t^2\right)2(1-Lt) dt + \int_{1/2}^{1} L\frac{1}{2}\left(2-2t\right)^2 2(1-Lt) dt \\
%& = L\left(\int_{0}^{1/2} 2\left(4t - 6t^2\right) dt + \int_{1/2}^{1} \left(2-2t\right)^2  dt\right) - L^2\left(\int_{0}^{1/2} \left(4t - 6t^2\right)2t\ dt + \int_{1/2}^{1} \left(2-2t\right)^2 t\ dt\right)\\
%& = \frac{2}{3}L - L^2\left(\int_{0}^{1/2} \left(4t - 6t^2\right)2t\ dt + \int_{1/2}^{1} \left(2-2t\right)^2 t\ dt\right)\\
%& = \frac{2}{3}L - L^2\left(\left[\frac{8}{3}t^3-3t^4\right]_0^{1/2} + \int_{1/2}^{1} \left(4t^3-8t^2+4t\right) dt\right)\\
%& = \frac{2}{3}L - L^2\left(\left[\frac{8}{3}t^3-3t^4\right]_0^{1/2} + \left[t^4-\frac{8}{3}t^3+2t^2\right]_{1/2}^{1}\right)\\
%& = \frac{2}{3}L - L^2\left(\frac{1}{3} - \frac{3}{16} + 1 - \frac{8}{3} + 2 - \frac{1}{16} + \frac{1}{3} - \frac{1}{2}\right) \\
%& = \frac{2}{3}L - L^2\left(- \frac{3}{16} + 1 - \frac{1}{16} - \frac{1}{2}\right) \\
%& = \frac{2}{3}L - L^2\left(\frac{-3+16-1-8}{16}\right) \\
& = \frac{2}{3}L - \frac{1}{4}L^2 \enspace .
\end{split}
\end{equation*}
The assumption (\ref{eq:L_assumption}) on $L$ implies
$$\Pr(SC) = \frac{1}{k} \enspace .$$
%$$\Pr(S \mid OV) = \frac{\Pr(OV \mid S)\Pr(S)}{\Pr(OV)} \leq 2/k(4/k-2/k^2) = 4/k^2 \cdot (2-\frac{1}{k}) $$
To prove the lemma, we now apply Bayes Theorem:
\begin{equation*}
\begin{split}
\Pr(SC \mid OV) & = \frac{\Pr(OV \mid SC)\Pr(SC)}{\Pr(OV)} \\
& = \frac{L}{\frac{2}{3}L - \frac{1}{4}L^2} \left(\frac{4}{3(k-1)^2}-\frac{1}{(k-1)^3}\right)\\
& = \frac{12}{8 - 3L} \left(\frac{4}{3(k-1)^2}-\frac{1}{(k-1)^3}\right) \enspace .
\end{split}
\end{equation*}
%\hfill \qed
\end{proof}

Lemma~\ref{lemma:P_SC_given_OV} is the core lemma for our paper so the lemma was verified empirically by simulation in Python using the pseudo-random number generator from the module "random". We generated $n=200000$ random intervals using the Scheinerman model and registered whether the intervals were overlapping and/or received the same color by our algorithm. In this way an empirical estimate of the probability was computed. Four simulations were performed for $k \in \{5, 10, 20, 30\}$ with $L = \frac{k-1}{5k}$. The lemma was confirmed by the following relative differences\footnote{$\frac{\mbox{estimate}-\mbox{probability}}{\mbox{probability}}$} between the empirical estimates and the probability stated by the lemma for the four values of $k$, respectively (the probability was used as the reference value): $-0.17\%$, $-0.22\%$, $-0.01\%$, and $0.48\%$.

We are now ready to present the main theorem of our paper for $k \geq 3$.
\begin{theorem}\label{thm:scheinerman_model} Let a set of intervals $\mathcal{I}_n$ be drawn using the Scheinerman model with $L$ satisfying (\ref{eq:L_assumption}). The algorithm~(\ref{eq:algorithm}) computes a cut of the corresponding MAX $k$-CUT STACKING instance for $k \geq 3$ such that
$$\frac{E(|cut|)}{E(m)} = 1-\frac{12}{8 - 3L} \left(\frac{4}{3(k-1)^2}-\frac{1}{(k-1)^3}\right)$$
where $m$ denotes the number of edges in the corresponding interval overlap graph/circle graph.
\end{theorem}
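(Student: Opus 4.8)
The plan is to obtain the theorem as an immediate consequence of Lemma~\ref{lemma:P_SC_given_OV} via linearity of expectation over the $\binom{n}{2}$ pairs of intervals. The first step is to observe that, for \emph{every} fixed instance $\mathcal{I}_n$ and \emph{every} coloring (in particular the one produced by algorithm~(\ref{eq:algorithm})), the cut equals the edge set minus the set of monochromatic edges, so $|cut| = m - (\#\text{monochromatic edges})$ holds deterministically. Hence it suffices to compute $E(m)$ and $E(\#\text{monochromatic edges})$.

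For the second step I would introduce, for each unordered pair $\{I_i,I_j\}$, the indicator variables $O_{ij}=\mathbf{1}[I_i\text{ and }I_j\text{ overlap}]$ and $M_{ij}=\mathbf{1}[I_i,I_j\text{ overlap and }A(I_i)=A(I_j)]$, so that $m=\sum_{\{i,j\}}O_{ij}$ and $\#\text{monochromatic edges}=\sum_{\{i,j\}}M_{ij}$. Since the centers and lengths of the intervals are drawn i.i.d.\ in the Scheinerman model, every pair $\{I_i,I_j\}$ has the same joint distribution as a single pair of intervals drawn from the model; therefore $E(O_{ij})=\Pr(OV)$ and $E(M_{ij})=\Pr(OV\cap SC)$ for each pair. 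By linearity of expectation (no independence across pairs is needed), $E(m)=\binom{n}{2}\Pr(OV)$ and $E(|cut|)=\binom{n}{2}\bigl(\Pr(OV)-\Pr(OV\cap SC)\bigr)$.

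The third step is to divide the two expressions. Note $\Pr(OV)=\tfrac{2}{3}L-\tfrac{1}{4}L^2>0$ (this is positive because assumption~(\ref{eq:L_assumption}) forces $\tfrac{k}{k-1}L\le 1$, i.e.\ $L\le\tfrac{k-1}{k}<1$), so the ratio is well defined and $\tfrac{E(|cut|)}{E(m)}=1-\tfrac{\Pr(OV\cap SC)}{\Pr(OV)}=1-\Pr(SC\mid OV)$. Substituting the value of $\Pr(SC\mid OV)$ from Lemma~\ref{lemma:P_SC_given_OV} yields exactly the stated formula.

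I do not expect a real obstacle: the theorem is essentially a repackaging of the core lemma, and the only point requiring a little care is to phrase the expectation computation so that it is transparent that correlations between distinct pairs play no role (only linearity of expectation is invoked, together with the fact that each pair is marginally distributed as in the model). If one wanted to be thorough, one could also remark that the deterministic identity $|cut|=m-(\#\text{monochromatic edges})$ is what allows the argument to go through pair by pair without worrying about the global structure of the random circle graph.
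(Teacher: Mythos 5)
Your proposal is correct and follows essentially the same route as the paper: linearity of expectation over the $\binom{n}{2}$ pairs gives $E(m)=\binom{n}{2}\Pr(OV)$ and $E(|cut|)=\binom{n}{2}\Pr(OV\cap\overline{SC})$, so $E(|cut|)/E(m)=1-\Pr(SC\mid OV)$, and Lemma~\ref{lemma:P_SC_given_OV} finishes the argument. Your additional remarks (the deterministic identity $|cut|=m-\#\text{monochromatic edges}$, the identical marginal distribution of each pair, and the positivity of $\Pr(OV)$) merely spell out details the paper leaves implicit.
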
 

\begin{proof}
By using linearity of expectation we obtain the following:
\begin{equation*}
\begin{split}
\frac{E(|cut|)}{E(m)} & = \frac{\Pr(OV \cap \overline{SC}){n \choose 2}}{\Pr(OV){n \choose 2}} \\
& = \frac{\Pr(OV)-\Pr(OV \cap SC)}{\Pr(OV)} \\
& = 1 - \Pr(SC \mid OV)  \enspace .
\end{split}
\end{equation*}
The theorem follows from Lemma~\ref{lemma:P_SC_given_OV}.
%\hfill \qed
\end{proof}

\section{Analysis of the Extended Scheinerman Model}\label{sec:extended_Scheinerman}

Our results are now extended to a more generic model for generating the instances where we allow the lengths of the intervals to be drawn using any (fixed) bounded continuous probability density function with $L$ as an upper bound on the lengths. In this section we show that the risk of assigning the same color to two overlapping intervals is also of the order $1/k^2$ for the generic instance model.

\begin{definition}\label{def:Extended_Scheinerman} The Extended Scheinerman Model: The centers of the intervals in $\mathcal{I}_n$ are drawn independently using a uniform distribution on $[0, 1]$. The number $L$ is an upper bound on the lengths of the intervals and the lengths are drawn independently using a bounded continuous probability density function $f$, $f(\ell) \leq B$.
\end{definition}

\begin{theorem}\label{thm:extended_scheinerman_model}
Under the same assumptions as in Theorem~\ref{thm:scheinerman_model} the following holds for algorithm~(\ref{eq:algorithm}) for the extended Scheinerman model:
$$\frac{E(|cut|)}{E(m)} \geq 1-O(k^{-2}) \enspace .$$
\end{theorem}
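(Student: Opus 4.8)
The plan is to mimic the proof of Theorem~\ref{thm:scheinerman_model} but replace the exact evaluations with upper and lower bounds that suffice to pin down the order of magnitude. As in that proof, linearity of expectation gives $\frac{E(|cut|)}{E(m)} = 1 - \Pr(SC \mid OV)$, so it is enough to show $\Pr(SC \mid OV) = O(k^{-2})$. By Bayes' theorem this equals $\frac{\Pr(OV \mid SC)\Pr(SC)}{\Pr(OV)}$, and $\Pr(SC) = 1/k$ still holds since it only depends on the center distribution and the assumption~(\ref{eq:L_assumption}). So the task reduces to (i) an upper bound $\Pr(OV \mid SC) = O(k^{-1})$ and (ii) a lower bound $\Pr(OV) = \Omega(1)$ that is bounded away from zero uniformly in $k$ (note $L$ is fixed, not shrinking).

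For step (i), I would first replay the observation that an overlap with the same color forces the two centers into the same $J$-interval (since intervals have length at most $L$ and the $J$-intervals have length $\frac{L}{k-1}$), hence $\Pr(OV \mid SC) = \Pr(OV \mid SI)\cdot\Pr(SI \mid SC) = \Pr(OV \mid SI)\cdot \frac{k}{k-1}L$. The distance between two centers conditioned on lying in the same $J$-interval has density $g_{1/(k-1)}$ from~(\ref{eq:pdf_distance}), which is supported on $[0,\frac{1}{k-1}]$, i.e. on center-distances $d = xL$ with $x \le \frac{1}{L(k-1)}$ — a shrinking range. Now I need a bound on $\Pr(OV \mid C = d)$ for small $d$ that does not rely on the lengths being uniform. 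Revisiting the overlap condition~(\ref{eq:overlap_condition}): given center-distance $d$, an overlap requires $|\ell_1 - \ell_2| < 2d < \ell_1 + \ell_2$, so in particular the pair $(\ell_1,\ell_2)$ must lie within distance $2d$ of the diagonal $\ell_1 = \ell_2$. Since $f \le B$, the probability that the unordered pair of (independent) lengths lands in this diagonal strip is $O(B d)$ — concretely, $\Pr(|\ell_1-\ell_2| < 2d) \le 2\int f(\ell)\Pr(|\ell_2 - \ell| < 2d)\,d\ell \le 2\cdot 4Bd = 8Bd$. Integrating against $g_{1/(k-1)}$ over $d \in [0, \frac{L}{k-1}]$ gives $\Pr(OV \mid SI) \le 8B \cdot E[d \mid SI] = 8B \cdot \frac{1}{3(k-1)} = O(k^{-1})$, and multiplying by $\frac{k}{k-1}L$ keeps it $O(k^{-1})$.

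For step (ii), I need $\Pr(OV) \ge c$ for some constant $c>0$ depending only on the fixed density $f$ and $L$. The cleanest route: condition on the event that both lengths exceed some threshold $\ell_0 > 0$ chosen so that $p_0 := \Pr(\ell \ge \ell_0) > 0$ (such $\ell_0$ exists because $f$ is a density on a set of positive measure inside $(0,L]$), and on the two lengths being within a factor of, say, $2$ of each other while still being comparable — actually it is simpler to just note that if both centers fall in $[\ell_0/2, 1-\ell_0/2]$ (probability $\ge 1 - \ell_0$ for small $\ell_0$, and anyway a positive constant), and if the center-distance $d$ satisfies $\frac{|\ell_1-\ell_2|}{2} < d < \frac{\ell_1+\ell_2}{2}$, there is an overlap; for lengths both in $[\ell_0, 2\ell_0]$ this is implied by $d \in (\ell_0/2, \ell_0)$-ish windows, an event of positive probability under the uniform center distribution. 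Assembling: $\Pr(OV) \ge \Pr(\ell_1,\ell_2 \in [\ell_0,2\ell_0]) \cdot \Pr(d \in \text{good window}) = \Omega(1)$. The main obstacle is being careful here — the overlap region in $(\ell_1,\ell_2,d)$-space must be shown to have positive measure against the product of $f\times f\times(\text{center density})$, uniformly, without circular dependence on $L$; but since $f$ is a fixed density this is a finite positive constant, and the only real work is picking the thresholds cleanly. Combining (i) and (ii) with $\Pr(SC)=1/k$ yields $\Pr(SC\mid OV) \le \frac{O(k^{-1})\cdot k^{-1}}{\Omega(1)} = O(k^{-2})$, which finishes the proof.
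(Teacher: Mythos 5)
Your proposal is correct and follows essentially the same route as the paper: reduce to $\Pr(SC\mid OV)$ via linearity of expectation and Bayes, keep $\Pr(SC)=1/k$, use the same-$J$-interval reduction to get $\Pr(OV\mid SC)=O(k^{-1})$ from the bounded density $f\le B$, and bound $\Pr(OV)$ below by a $k$-independent constant. The only (harmless) differences are cosmetic: the paper bounds $\Pr(OV\mid C=xL)$ by scaling the uniform-case formula by $L^2B^2$, whereas you use the linear bound $\Pr(|\ell_1-\ell_2|<2d)=O(Bd)$, and your $E[d\mid SI]$ should read $\frac{L}{3(k-1)}$ rather than $\frac{1}{3(k-1)}$, which does not affect the $O(k^{-2})$ conclusion.
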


\begin{proof}
Now assume that the intervals are drawn using the extended Scheinerman model.
By a slight modification of the proof of Lemma~\ref{lemma:P_ov_given_center}, we get the following for $x \leq 0.5$:
$$Pr(OV \mid C=xL) \leq L^2B^2(4x - 6x^2) \enspace .$$

We revisit the proof of Lemma~\ref{lemma:P_SC_given_OV} and establish an upper bound for $\Pr(OV~\mid~SC)$ using the factor $L^2B^2$:
$$\Pr(OV \mid SC) \leq L^2B^2 \cdot kL \left(\frac{4}{3(k-1)^2}-\frac{1}{(k-1)^3}\right) \enspace .$$

The probability of receiving the same color has not changed, $\Pr(SC) = \frac{1}{k}$, so Bayes Theorem can once again ensure that the theorem holds. We just have to make sure that the conditional probability is well defined -- in other words that $\Pr(OV) > 0$: There exists an $\ell_0$ such that $f(\ell_0) > 0$. Let $\epsilon > 0$ be a sufficiently small positive number. Let $\ell_1$ and $\ell_2$ denote the lengths of two intervals drawn using the extended Scheinerman model. By using (\ref{eq:overlap_condition}) and independence we can verify that $\Pr(OV) > 0$:
$$\Pr(OV) \geq \Pr(|l_1 - l_0| < \epsilon \wedge |l_2 - l_0| < \epsilon \wedge \epsilon < d < \ell_0-\epsilon) > 0 \enspace .$$ This concludes the proof.
%\hfill \qed
\end{proof}

The theorem implies a corollary that is targeted at an audience with an interest in circle graphs/interval overlap graphs. The corollary indicates that these graphs on average have $k$-cuts with a relatively high cardinality compared to the number of edges.
\begin{corollary} Let $\mathcal{I}_n$ be drawn using the extended Scheinerman model. For the circle graph/interval overlap graph represented by $\mathcal{I}_n$ we have the following for $k \geq 3$
$$\frac{E(|cut|)}{E(m)} \geq 1-O(k^{-2})$$
where "$cut$" denotes the maximum size $k$-cut.
\end{corollary}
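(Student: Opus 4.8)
The plan is to observe that the corollary is an almost immediate consequence of Theorem~\ref{thm:extended_scheinerman_model}. The only gap to bridge is that the theorem concerns the specific cut produced by the oblivious algorithm~(\ref{eq:algorithm}), whereas the corollary speaks about the \emph{maximum} $k$-cut of the same circle graph. Since a maximum $k$-cut is at least as large as any particular $k$-cut, including the one output by the algorithm, the inequality carries over. Concretely, if $cut_A$ denotes the cut found by the algorithm on the instance $\mathcal{I}_n$ and $cut_{\max}$ denotes a maximum $k$-cut of the same interval overlap graph, then $|cut_A| \leq |cut_{\max}|$ holds pointwise for every realization of $\mathcal{I}_n$.

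First I would fix an arbitrary outcome of the extended Scheinerman model, yielding a concrete set of intervals and hence a concrete interval overlap graph $G$ with $m$ edges. For this fixed $G$ we have the deterministic inequality $|cut_A| \leq |cut_{\max}|$, simply because $cut_A$ is one feasible $k$-cut and $cut_{\max}$ is defined to be of maximum cardinality among all $k$-cuts of $G$. Next I would take expectations over the random draw of $\mathcal{I}_n$; by monotonicity of expectation this gives $E(|cut_A|) \leq E(|cut_{\max}|)$, while $E(m)$ is the same quantity in both ratios and, as noted in the proof of Theorem~\ref{thm:extended_scheinerman_model}, is strictly positive so the ratios are well defined. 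Dividing through by $E(m)$ yields
$$\frac{E(|cut_{\max}|)}{E(m)} \geq \frac{E(|cut_A|)}{E(m)} \geq 1 - O(k^{-2}),$$
where the second inequality is exactly the conclusion of Theorem~\ref{thm:extended_scheinerman_model}. This establishes the corollary.

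I do not anticipate any genuine obstacle here: the argument is a one-line sandwiching once the theorem is in hand. The only point that deserves a sentence of care is the well-definedness of the ratio, i.e. that $E(m) = \Pr(OV)\binom{n}{2} > 0$, which is precisely the positivity of $\Pr(OV)$ already verified at the end of the proof of Theorem~\ref{thm:extended_scheinerman_model}. One might also remark, for completeness, that the hidden constant in the $O(k^{-2})$ term is inherited unchanged from the theorem, so the bound for the maximum $k$-cut is at least as strong as the algorithmic one — indeed potentially much stronger, since the maximum cut could be considerably larger than what the oblivious algorithm extracts.
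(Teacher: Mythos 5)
Your proposal is correct and matches the paper's intent: the paper states the corollary as an immediate consequence of Theorem~2, the implicit argument being exactly yours, namely that the maximum $k$-cut is pointwise at least as large as the cut produced by the oblivious algorithm, so the expected-ratio bound carries over. Your added remark on $E(m) > 0$ via $\Pr(OV) > 0$ is a sensible extra precaution, but there is no substantive difference in approach.
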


\section*{Conclusion}

We have presented an extremely simple oblivious stacking algorithm with a surprisingly low risk -- of the order $1/k^2$ -- of assigning two conflicting items to the same storage location under mild assumptions on the distribution of the storage time intervals. The algorithm can easily be used in a distributed setting with disconnected stacking agents. The principle guiding the algorithm is to assign the same storage location to two items if the centers of their storage time intervals are close to each other implying a low risk of a conflict. This principle can probably be used in other stacking algorithms/heuristics.

Our algorithm can be used on its own but there is also a possibility that it can be used in a preprocessing step since the items are split into $k$ groups with only a few conflicts to handle in each group. As an example, each group could correspond to an area of a container terminal with $k << n$. Our algorithm could be used to assign an area to each container in the preprocessing step and a local stacking algorithm could pick a specific storage location for the container in the particular area.

\bibliographystyle{plain} 

\bibliography{MinShift}

\end{document}